\documentclass[journal]{IEEEtran}

\usepackage{cite}
\usepackage{pslatex}
\usepackage{moreverb}
\usepackage{epsfig}
\usepackage{color}
\usepackage{amsmath}
\usepackage{amssymb}
\usepackage{booktabs}
\usepackage{array}
\usepackage{multirow}
\usepackage{threeparttable}
\usepackage{url}
\usepackage{dsfont}
\usepackage{epsfig,rotating,setspace,latexsym,amsmath,epsf,amssymb,bm,theorem}
\newtheorem{Theo}{Theorem}
\newtheorem{Lem}{Lemma}
\newtheorem{Def}{Definition}

\DeclareMathAlphabet{\mathpzc}{OT1}{pzc}{m}{it}

\def\calW{{\mathcal W}}

\def\ba{{\mathbf a}}
\def\bb{{\mathbf b}}

\def\bg{{\mathbf g}}
\def\bh{{\mathbf h}}

\def\bx{{\mathbf x}}

\def\bA{{\mathbf A}}

\def\bD{{\mathbf D}}

\def\bI{{\mathbf I}}

\def\bU{{\mathbf U}}

\def\diag{{\rm diag}}

\def\E{{\mathds E}}

\def\b0{{\mathbf 0}}

\begin{document}

\title{On Ergodic Secrecy Capacity of Multiple Input Wiretap Channel with Statistical CSIT}

\author{
\authorblockN{Shih-Chun Lin and  Pin-Hsun Lin}
\thanks{
S.-C. Lin is with Graduate Institute of Computer and Communication
Engineering, National Taipei University of Technology, Taipei,
Taiwan 10643, e-mail: sclin@ntut.edu.tw, P.-H. Lin is with the
Department of Electrical Engineering and Graduate Institute of
Communication Engineering, National Taiwan University, Taipei,
Taiwan, 10617, e-mail: pinhsunlin@gmail.com}
\thanks{This work was supported by the National
Science Council, Taiwan, R.O.C., under Grants
100-2628-E-007-025-MY3.} }

\maketitle 

\begin{abstract}
We consider the secure transmission in ergodic fast-Rayleigh
fading multiple-input single-output single-antenna-eavesdropper
(MISOSE) wiretap channels. We assume that the statistics of both
the legitimate and eavesdropper channels is the only available
channel state information at the transmitter (CSIT). By
introducing a new secrecy capacity upper bound, we prove that the
secrecy capacity is achieved by Gaussian input without prefixing.
To attain this, we form another MISOSE channel for upper-bounding,
and tighten the bound by finding the worst correlations between
the legitimate and eavesdropper channel coefficients. The
resulting upper bound is tighter than the others in the literature
which are based on modifying the correlation between the noises at
the legitimate receiver and eavesdropper. Next, we fully
characterize the ergodic secrecy capacity by showing that the
optimal channel input covariance matrix is a scaled identity
matrix, with the transmit power allocated uniformly among the
antennas. The key to solve such a complicated stochastic
optimization problem is by exploiting the completely monotone
property of the ergodic secrecy capacity to use the stochastic
ordering theory. Finally, our simulation results show that for the
considered channel setting, the secrecy capacity is bounded in
both the high signal-to-noise ratio and large number of transmit
antenna regimes.
\end{abstract}

\vspace{-6mm}

\section{Introduction} \label{Sec_Intro}
The secrecy capacity of a wiretap channel is the maximum
achievable secrecy rate between the transmitter and a legitimate
receiver, and a perfect secrecy constraint is imposed to make no
information be attainable by an eavesdropper
\cite{csiszar1978broadcast}\cite{Wyner_wiretap}. In the wireless
environments, the time-varying characteristic of fading channels
can also be exploited to enhance the secrecy
capacity\cite{gopala2008secrecy}. Further enhancements are
attainable by employing multiple antennas at each node, e.g., in
\cite{Oggier_MIMOME}\cite{Liu_Note_on_secrecy}. However, these
secrecy capacity results
\cite{gopala2008secrecy,Oggier_MIMOME,Liu_Note_on_secrecy} rely on
perfect knowledge of the legitimate receiver's channel state
information at the transmitter (CSIT). Because of the limited
feedback bandwidth and the delay caused by channel estimation, it
may be hard to track the channel coefficients if they vary
rapidly. Thus for fast-fading channels, it is more practical to
consider the case with only partial CSIT of the legitimate
channel. However, in this case, only some lower and upper bounds
of the secrecy capacity are known
\cite{Petropulu_MISO_ergodic}\cite{Rezki_wrong}, and the secrecy
capacity is unknown. Although the general secrecy capacity formula
is reported in \cite{csiszar1978broadcast}, the optimal auxiliary
random variable for prefixing in this formula is \textit{still
unknown}.

In this letter, we consider one important scenario of partial
CSIT, i.e., the transmitter only knows the statistics of both the
legitimate and eavesdropper channels but not the realizations of
them. Under this scenario, we derive the secrecy capacity of the
fast-fading, multiple-input single-output
single-antenna-eavesdropper (MISOSE) wiretap channels, where the
transmitter has multiple antennas while the legitimate receiver
and eavesdropper each have single antenna. Both the coefficients
of the legitimate and eavesdropper channels are Rayleigh faded. We
first propose a new secrecy capacity upper bound to show that the
transmission scheme in \cite{Petropulu_MISO_ergodic} is
secrecy-capacity achieving, which is based on
\cite{csiszar1978broadcast} with Gaussian input but without
prefixing. Then we find the optimal channel input covariance
matrix analytically to fully characterize the ergodic secrecy
capacity, while such a optimization problem is solved numerically
in \cite{Petropulu_MISO_ergodic} without guaranteeing the
optimality. The key is to exploit the completely monotone property
of the ergodic secrecy capacity, then invoking the stochastic
ordering theory \cite{shaked_stochastic_order}.

To obtain a tighter secrecy capacity upper bound than that
reported in \cite{Rezki_wrong}, we introduce another MISOSE
channel with a relaxed secrecy constraint for upper-bounding,
while finding the worst correlations between the coefficients of
the legitimate and eavesdropper channels to tighten the bound. In
\cite{Rezki_wrong}, the upper bound is obtained by directly
applying the concepts from
\cite{gopala2008secrecy}\cite{Oggier_MIMOME} where the correlation
is only introduced between the noises at the legitimate receiver
and eavesdropper and the secrecy constraint is left unchanged.
Note that the secrecy capacity lower bound in \cite{Rezki_wrong}
is indeed not achievable. In order to achieve such a bound, the
variable-rate coding in \cite{gopala2008secrecy} must be invoked,
where the full CSIT of the legitimate channel must be used to vary
the transmission rate in every channel fading state. This can not
be done with only statistical CSIT of the legitimate channel as in
our setting. In addition to the CSIT assumptions, the secrecy
capacity result of \cite{gopala2008secrecy} is builded on the
ergodic slow fading channel assumption where coding among lots of
slow fading channel blocks (each block with lots of coded symbols)
is used. This assumption may be unrealistic owing to the long
latency. For fast fading channels with full CSIT of legitimate
channel and statistical CSIT of the eavesdropper channel, only
some achievability results are known \cite{Li_fading_secrecy_j}.
In contrast to our results, in \cite{Li_fading_secrecy_j}, the
prefixing in \cite{csiszar1978broadcast} may be useful to increase
the secrecy rate. More detailed comparisons between our results
and those in
\cite{Rezki_wrong}\cite{Petropulu_MISO_ergodic}\cite{gopala2008secrecy}
can be found in Remarks 1 and 2.

\section{System Model}\label{Sec_system_model}
In the considered MISOSE wiretap channel, we study the problem of
reliably communicating a secret message $w$ from the transmitter
to the legitimate receiver subject to a constraint on the
information attainable by the eavesdropper (in upcoming
\eqref{eq_equivocation_given_h}). The received signals $y$ and $z$
at the legitimate receiver and eavesdropper (each with single
antenna) from the transmitter equipped with multiple-antenna, can
be represented respectively as \footnote{ In this letter,
$\|\ba\|$ is the vector norm of vector $\mathbf{a}$. The trace and
complex conjugate transpose of matrix $\mathbf{A}$ is denoted by
$\mbox{Tr}(\mathbf{A})$ and $\bA^{\mathrm{H}}$, respectively. The
diagonal matrix is denoted by diag(.). The zero-mean complex
Gaussian random vector with covariance matrix $\Sigma$ is denoted
as $CN(0, \Sigma)$. For random variables (vectors) $A$ and $B$,
$p(A)$ is the probability distribution function (PDF) of $A$,
$I(A;B)$ denotes the mutual information between them while $H(A|B)
$ denotes the conditional differential entropy. We use $A
\rightarrow B \rightarrow C$ to represent that $A,B$, and $C$ form
a Markov chain. All the logarithm operations are of base 2 such
that the unit of rates is in bit.}
\begin{align}
y&=\mathbf{h}^H\mathbf{x}+n_y, \label{eq_legitimate channel}\\
z&=\mathbf{g}^H\mathbf{x}+n_z,\label{eq_eve channel}
\end{align}
 where
$\mathbf{x}$ is a $N_t \times 1$ complex vector representing the
transmitted vector signal, $N_t$ is the number of transmit
antennas, while $n_y$ and $n_z$ are independent and identically
distributed (i.i.d.) circularly symmetric additive white Gaussian
noise with zero mean and unit variance at the legitimate receiver
and eavesdropper, respectively. In \eqref{eq_legitimate channel}
and \eqref{eq_eve channel}, $\mathbf{h}$ and $\mathbf{g}$ are both
$N_t \times 1$ complex vectors, and representing the channels from
the transmitter to the legitimate receiver and eavesdropper,
respectively.

In this work, the channels are assumed to be fast Rayleigh fading.
That is, $\mathbf{h}\sim CN(0, \sigma_{\bh}^2 \mathbf{I})$ and
$\mathbf{g}\sim CN(0, \sigma_{\bg}^2 \mathbf{I})$, respectively,
while $\mathbf{h}$, $\mathbf{g}$, $n_y$ and $n_z$ are independent.
The channel coefficients change in every symbol time. We assume
that the legitimate receiver knows the instantaneous channel state
information of $\mathbf{h}$ perfectly, while the eavesdropper
knows those of $\mathbf{h}$ and $\mathbf{g}$ perfectly. As for the
CSIT, only the distributions of $\mathbf{h}$ and $\mathbf{g}$ are
known at the transmitter, while the realizations of $\mathbf{h}$
and $\mathbf{g}$ are unknown. Thus the transmitter is subjected to
a power constraint as
\begin{equation} \label{power_cons}
\mathrm{Tr}(\Sigma_{\bx}) \leq P,
\end{equation}
where $\Sigma_{\bx}$ is the covariance matrix of $\mathbf{x}$ in
\eqref{eq_legitimate channel} and \eqref{eq_eve channel}.

The perfect secrecy and secrecy capacity are defined as follows.
Consider a $(2^{NR}, N)$-code with an encoder that maps the
message $w\in \calW_N=\{1,2,\ldots, 2^{NR}\}$ into a length-$N$
codeword; and a decoder at the legitimate receiver that maps the
received sequence $y^N$ (the collections of $y$ over the code
length $N$) from the legitimate channel \eqref{eq_legitimate
channel} to an estimated message $\hat w\in\calW_N$. We then have
the following definitions. \vspace{-2mm}
\begin{Def}[Secrecy Capacity
\cite{gopala2008secrecy}] \label{Def_Perfect} {\it Perfect secrecy
is achievable with rate $R$ if, for any $\varepsilon'>0$, there
exists a sequence of $(2^{NR}, N)$-codes and an integer $N_0$ such
that for any $N>N_0$
\begin{equation}\label{eq_equivocation_given_h}
H(w|z^N,\mathbf{h}^N, \mathbf{g}^N)/N> R-\varepsilon', \mathrm{and
}\;\;{\rm Pr}(\hat{w}\neq w) \leq \varepsilon',
\end{equation}
where $w$ is the secret message, $z^N$,$\mathbf{h}^N$, and
$\mathbf{g}^N$ are the collections of $z, \mathbf{h}$, and
$\mathbf{g}$ over code length $N$, respectively. The {\bf secrecy
capacity} $C_s$ is the supremum of all achievable secrecy rates.}

\end{Def}

\vspace{-6mm}

\section{ Secrecy capacity of the MISOSE fast Rayleigh fading wiretap channel}\label{Sec_new_UB}

In this section, we fully characterize the secrecy capacity of the
MISOSE fast Rayleigh fading channel in the upcoming Theorem
\ref{Theorem_No_CQI}. Before that, we present the following
Theorem \ref{Lem_No_CQI} which shows that the scheme in
\cite{Petropulu_MISO_ergodic}, which uses Gaussian $\bx$ without
prefixing in \cite{csiszar1978broadcast}, is capacity achieving.
By introducing new bounding techniques, we obtain tighter secrecy
capacity upper bound than that in \cite{Rezki_wrong} to attain the
secrecy cpapcity. For such a upper bound, we form a better
degraded MISOSE channel of \eqref{eq_legitimate
channel}\eqref{eq_eve channel} with a less stringent perfect
secrecy constraint than \eqref{eq_equivocation_given_h} (in the
upcoming \eqref{eq_equivocation_no_h}), and tighten the upper
bound by carefully introducing correlations to the channels $\bh$
and $\bg$ (in the upcoming \eqref{eq_same_dist_channel_nor2}).

\begin{Theo} \label{Lem_No_CQI}
For the MISOSE fast Rayleigh fading wiretap channel
\eqref{eq_legitimate channel}\eqref{eq_eve channel} with the
statistical CSIT of $\mathbf{h}$ and $\mathbf{g}$, using Gaussain
$\bx$ without prefixing is the optimal transmission strategy. And
the non-zero secrecy capacity $C_s$ is obtained only when
$\sigma_{\bh}>\sigma_{\bg}$, which is
\begin{equation}
C_s=\!\max_{\Sigma_{\bx}}\!\left(\E_{\bh}\!\left[\!\log\left(1+\bh^H\Sigma_{\bx}\bh\!\right)\!\right]\!-\!\E_{\bg}\!\left[\!\log\!\left(1+\bg^H\Sigma_{\bx}\bg\!\right)\!\right]\!\right)\!,
\label{eq_MISOSE_capa_beam}
\end{equation}
where $\Sigma_{\bx}$ is the covariance matrix of the Gaussian
channel input $\bx$ and subject to \eqref{power_cons}, while
$\mathbf{h}\sim CN(0, \sigma_{\bh}^2 \mathbf{I})$ and
$\mathbf{g}\sim CN(0, \sigma_{\bg}^2 \mathbf{I})$.
\end{Theo}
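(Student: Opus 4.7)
Apply the Csisz\'ar--K\"orner lower bound \cite{csiszar1978broadcast} with $V=\bx\sim CN(\b0,\Sigma_{\bx})$ (no prefixing). Since the legitimate receiver observes $(y,\bh)$ and the eavesdropper $(z,\bh,\bg)$, while $\bx$ is independent of $(\bh,\bg)$ and $z$ is conditionally independent of $\bh$ given $(\bx,\bg)$, one obtains $I(\bx;y,\bh)=\E_{\bh}[\log(1+\bh^H\Sigma_{\bx}\bh)]$ and $I(\bx;z,\bh,\bg)=\E_{\bg}[\log(1+\bg^H\Sigma_{\bx}\bg)]$. Maximizing the difference over $\Sigma_{\bx}$ subject to \eqref{power_cons} gives the RHS of \eqref{eq_MISOSE_capa_beam}.

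\textbf{Converse setup.} Fano's inequality together with \eqref{eq_equivocation_given_h} yields $NR\le I(W;y^N|\bh^N)-I(W;z^N|\bh^N,\bg^N)+o(N)$. Since $z^N$ is conditionally independent of $\bh^N$ given $(\bx^N,\bg^N)$ and $W\perp(\bh^N,\bg^N)$, the eavesdropper's term collapses to $I(W;z^N|\bg^N)$, giving $NR\le I(W;y^N|\bh^N)-I(W;z^N|\bg^N)+o(N)$. The obstacle is that with \emph{independent} $\bh$ and $\bg$ the wiretap is not degraded, so this expression does not single-letterize in a way that directly pins down Gaussian $\bx$ without prefixing as optimal.

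\textbf{Bounding via coupling.} I would address this by a two-step relaxation: (i) replace the equivocation $H(W|z^N,\bh^N,\bg^N)$ in \eqref{eq_equivocation_given_h} by $H(W|z^N,\bg^N)$; since conditioning reduces entropy, any code satisfying \eqref{eq_equivocation_given_h} satisfies the relaxed constraint, so the corresponding secrecy capacity $C_s^{\rm r}$ upper-bounds $C_s$; (ii) observe that both the legitimate decoding law and the relaxed equivocation depend only on the marginals $p(\bh)$ and $p(\bg)$, so $C_s^{\rm r}$ is invariant under any joint $\tilde p(\bh,\bg)$ with the correct marginals. Substitute the virtual coupling $\bg=(\sigma_{\bg}/\sigma_{\bh})\bh$, valid when $\sigma_{\bh}>\sigma_{\bg}$; setting $c\triangleq\sigma_{\bg}/\sigma_{\bh}$, a direct calculation shows $p(z|\bx,\bh)=\int p(z|y,\bh)\,p(y|\bx,\bh)\,dy$ with $p(z|y,\bh)=CN(z;c^{*}y,1-|c|^{2})$, establishing stochastic degradation $\bx\to(y,\bh)\to(z,\bh,\bg)$ in the virtual channel. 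For such a degraded Gaussian wiretap, Gaussian $\bx$ without prefixing attains the Csisz\'ar--K\"orner maximum, and the change of variables $\bg=c\bh$ inside the eavesdropper's expectation gives $C_s^{\rm r}(\text{virtual})=\max_{\Sigma_{\bx}}\bigl(\E_{\bh}[\log(1+\bh^H\Sigma_{\bx}\bh)]-\E_{\bg}[\log(1+\bg^H\Sigma_{\bx}\bg)]\bigr)$, which coincides with the achievable rate. For $\sigma_{\bh}\le\sigma_{\bg}$, reversing the roles of $\bh$ and $\bg$ in the coupling makes the legitimate channel stochastically degraded from the eavesdropper's, forcing $C_s=0$.

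\textbf{Main obstacle.} The real work lies in step (ii): rigorously justifying the marginal invariance of $C_s^{\rm r}$ (so that swapping the true independent joint for the virtual deterministic coupling does not alter the value of the bound) and then proving the stochastic-degradation lemma for precisely the regime $\sigma_{\bh}>\sigma_{\bg}$. This is exactly the strengthening over \cite{Rezki_wrong}, whose bound only perturbs the correlation between the noises $n_y$ and $n_z$ while keeping $\bh,\bg$ independent---a relaxation too weak to induce degradation and therefore unable to rule out a benefit from prefixing.
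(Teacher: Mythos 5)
Your proposal is correct and follows essentially the same route as the paper: you relax the secrecy constraint by dropping $\bh^N$ from the eavesdropper's conditioning, invoke invariance of the relaxed secrecy capacity under the joint law of $(\bh,\bg)$ given fixed marginals, impose the worst-case coupling $\bg=(\sigma_{\bg}/\sigma_{\bh})\bh$ to obtain a (conditionally) degraded wiretap channel, and then conclude that Gaussian input without prefixing is optimal, with the zero-capacity claim for $\sigma_{\bh}\le\sigma_{\bg}$ following by reversing the degradation. This matches the paper's construction of the same-marginal channel \eqref{eq_same_dist_channel_nor}--\eqref{eq_same_dist_channel_nor2} and its chain of bounds \eqref{eq_ub_sm0}--\eqref{eq_ub_sm3}.
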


\begin{proof}
From \cite{Petropulu_MISO_ergodic}, we know that the
right-hand-side (RHS) of \eqref{eq_MISOSE_capa_beam} is achievable
and serves as a secrecy capacity lower-bound. Now we present our
new secrecy capacity upper bound which matches the RHS of
\eqref{eq_MISOSE_capa_beam}. The key for establishing such an
upper-bound is first forming a better MISOSE channel than
\eqref{eq_legitimate channel}\eqref{eq_eve channel} in terms of
higher secrecy capacity, and applying the results in
\cite{csiszar1978broadcast}. First, we consider a better scenario
where the eavesdropper does not know the realizations of $\bh$,
and \eqref{eq_equivocation_given_h} becomes
\begin{equation} \label{eq_equivocation_no_h}
H(w|z^N,\mathbf{g}^N)/N> R-\varepsilon', \mathrm{and }\;\;{\rm
Pr}(\hat{w}\neq w) \leq \varepsilon'.
\end{equation}
As in \cite{Caire_fading}, equivalently, we can respectively treat
the output of the legitimate channel as $(y,\bh)$ while that of
the eavesdropper channel as (z,$\bg$). From
\cite{Liang_same_marginal}, the secrecy capacity under constraint
\eqref{eq_equivocation_no_h} is only related to the marginal
distributions $p(y,\bh|\mathbf{x})$ and $p(z,\bg|\mathbf{x})$.
Then for any joint conditional PDF $p_{y',\bh',z,\bg|\mathbf{x}}$
such that $p(y',\bh'|\mathbf{x})=p(y,\bh|\mathbf{x})$, the secrecy
capacity under constraint \eqref{eq_equivocation_no_h} is the
same. Now we introduce our same marginal legitimate channel
$p(y',\bh'|\mathbf{x})$ for \eqref{eq_legitimate channel}, which
is formed by replacing $\bh$ in y with
$\bh'=(\sigma_{\bh}/\sigma_{\bg})\mathbf{g}$ as
\begin{equation} \label{eq_same_dist_channel_nor}
y'=(\bh')^H\mathbf{x}+n_y=((\sigma_{\bh}/\sigma_{\bg})\mathbf{g})^H\mathbf{x}+n_y.
\end{equation}
Since $\bx$ is independent of $\mathbf{h}$ and $\mathbf{g}$ due to
our CSIT assumption, both $\bh'$ and $\mathbf{h}$ have the same
conditional distributions condition on $\bx$ (which equal to
$CN(0, \sigma_{\bh}^2 \mathbf{I})$). Then we know that
$p(y',\bh'|\mathbf{x})$=$p(y,\bh|\mathbf{x})$. From
\eqref{eq_same_dist_channel_nor}, we will focus on the MISOSE
channel in the following,
\begin{equation} \label{eq_same_dist_channel_nor2}
 y''=\mathbf{g}^H\mathbf{x}+(\sigma_{\bg}/\sigma_{\bh})n_y, \;\;\;
z=\mathbf{g}^H\mathbf{x}+n_z.
\end{equation}

We now use the secrecy capacity of the MISOSE channel
\eqref{eq_same_dist_channel_nor2}, under constraint
\eqref{eq_equivocation_no_h}, to upper-bound that of the original
channel \eqref{eq_legitimate channel}\eqref{eq_eve channel}.
Again, we can treat as $(y'',\bg)$ and $(z,\bg)$ as the output of
the legitimate channel and that of the eavesdropper channel in
\eqref{eq_same_dist_channel_nor2}, respectively. As in
\cite{Li_fading_secrecy_j}, we apply this fact into the secrecy
capacity formula of \cite{csiszar1978broadcast},
\begin{align}
C_s &\leq
\max_{p_{(U,\bx)}}I(U;y'',\bg)-I(U;z,\bg)\label{eq_ub_sm0}
\\ &=\max_{p_{(U,\bx)}}I(U;y''|\bg)-I(U;z|\bg), \label{eq_ub_sm}
\\ & \leq \max_{p_{\mathbf{x}}}I(\bx;y''|z,\bg). \label{eq_ub_sm2}
\end{align}
where $U$ in \eqref{eq_ub_sm0} is an auxiliary random variable for
prefixing, which forms the Markov chain $U \rightarrow \bx
\rightarrow (y'',z,\bg)$; the equality \eqref{eq_ub_sm} follows
from \cite{Caire_fading} by the independence of $U$ and $\bg$ due
to our CSIT assumptions; and the inequality \eqref{eq_ub_sm2} is
from \cite{Liu_Note_on_secrecy}. When $\sigma_{\bg}<\sigma_{\bh}$,
the equivalent channel \eqref{eq_same_dist_channel_nor2} is
degraded, and $\bx \rightarrow y'' \rightarrow z$ given $\bg$.
From \cite{Cover_IT}, apply the Markov chain property to
\eqref{eq_ub_sm2}
\begin{equation} \label{eq_ub_sm3}
C_s \leq \max_{p_{\mathbf{x}}}I(\bx;y''|\bg)-I(\bx;z|\bg).
\end{equation}
From \cite{Oggier_MIMOME}, we know that Gaussian $\bx$ is optimal
for the upper bound in \eqref{eq_ub_sm3}, and the upper-bound in
\eqref{eq_ub_sm3} matches the RHS of \eqref{eq_MISOSE_capa_beam}
when $\sigma_{\bg}<\sigma_{\bh}$. Note that when
$\sigma_{\bg}<\sigma_{\bh}$, the RHS of
\eqref{eq_MISOSE_capa_beam} is positive. In contrast, when
$\sigma_{\bg}\geq\sigma_{\bh}$, the upper bound in
\eqref{eq_ub_sm2} is zero since from
\eqref{eq_same_dist_channel_nor2}, $\bx \rightarrow z \rightarrow
y'' $ given $\bg$. And it concludes the proof.
\end{proof}

\textit{Remark 1}: When the transmitter additionally knows the
realizations of $\mathbf{h}$, e. g. \cite{gopala2008secrecy}, the
legitimate channel \eqref{eq_same_dist_channel_nor} is not a same
marginal channel of \eqref{eq_legitimate channel}. In this case,
given $\bx$, $\mathbf{h}$ may not be Gaussian but
$\mathbf{h}'=(\sigma_{\bh}/\sigma_{\bg})\mathbf{g}$ is Gaussian,
and $p(y',\bh'|\mathbf{x})$ from \eqref{eq_same_dist_channel_nor}
may not equal to $p(y,\mathbf{h}|\mathbf{x})$ from
\eqref{eq_legitimate channel}, In our case, the CSIT assumption
makes $\bx$ independent of $\bh$ and $\bg$, then the legitimate
channel in \eqref{eq_same_dist_channel_nor} has the same marginal
as that in \eqref{eq_legitimate channel}. Then we can get rid of
the unrealistic ergodic \textit{slow fading} assumptions in
\cite{gopala2008secrecy} and find the secrecy capacity in
\textit{fast fading} channel. Note that the upper-bound in
\cite{Rezki_wrong} is obtained by directly applying the
derivations in \cite{gopala2008secrecy} to fast fading channel,
and is looser than the upper-bound \eqref{eq_ub_sm2} which is
based on the channel \eqref{eq_same_dist_channel_nor2} and is
tightened by the ``worst'' correlation between $\bh$ and $\bg$
($\bh=\bg$).
\\

Now we show that the optimal $\Sigma_{\bx}$ of
\eqref{eq_MISOSE_capa_beam} is $\diag\{P/N_T,\dots,P/N_T\}$, and
fully characterize the secrecy capacity as follows.

\begin{Theo} \label{Theorem_No_CQI}
For the MISOSE fast Rayleigh fading wiretap channel
\eqref{eq_legitimate channel}\eqref{eq_eve channel} with the
statistical CSIT of $\mathbf{h}$ and $\mathbf{g}$, under power
constraint $P$, the non-zero secrecy capacity $C_s$ is obtained
only when $\sigma_{\bh}>\sigma_{\bg}$, which is
\begin{equation} \label{eq_MISOSE_capa}
C_s=\E_{\bh}\! \left[ \! \log
\!\left(1+P\frac{||\mathbf{h}||^2}{N_T}\!\right)\!\right]\!-\!\E_{\bg}\!\left[\!\log\!\left(1+P\frac{||\mathbf{g}||^2}{N_T}\!\right)\!\right],
\end{equation}
where $\mathbf{h}\sim CN(0, \sigma_{\bh}^2 \mathbf{I})$,
$\mathbf{g}\sim CN(0, \sigma_{\bg}^2 \mathbf{I})$, and $N_T$ is
the number of transmit antennas.
\end{Theo}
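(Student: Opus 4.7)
The plan is to reduce the matrix optimization in \eqref{eq_MISOSE_capa_beam} to a scalar optimization over the eigenvalues $\lambda_1,\dots,\lambda_{N_T}$ of $\Sigma_{\bx}$ (with $\sum_i\lambda_i\leq P$), and then exploit complete monotonicity together with majorization/stochastic ordering to pin down the optimizer. Since $\bh\sim CN(0,\sigma_{\bh}^2\mathbf{I})$ is unitarily invariant (and so is $\bg$), $\bh^H\Sigma_{\bx}\bh$ has the same distribution as $\sigma_{\bh}^2\sum_i\lambda_i X_i$ with $X_i$ i.i.d.\ unit-mean exponentials. Writing $T_\lambda\triangleq\sum_i\lambda_i X_i$, the objective in \eqref{eq_MISOSE_capa_beam} depends on $\Sigma_{\bx}$ only through $\lambda$:
\begin{equation*}
C_s(\lambda)=\E[\log(1+\sigma_{\bh}^2 T_\lambda)]-\E[\log(1+\sigma_{\bg}^2 T_\lambda)].
\end{equation*}

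To expose complete monotonicity, I would rewrite this difference as an integral in the SNR variable $s$. Using $\frac{d}{ds}\E[\log(1+sT_\lambda)]=s^{-1}\bigl(1-\E[1/(1+sT_\lambda)]\bigr)$ and integrating $s$ from $\sigma_{\bg}^2$ to $\sigma_{\bh}^2$ yields
\begin{equation*}
C_s(\lambda)=\log\frac{\sigma_{\bh}^2}{\sigma_{\bg}^2}-\int_{\sigma_{\bg}^2}^{\sigma_{\bh}^2}\frac{1}{s}\,\E\!\left[\frac{1}{1+sT_\lambda}\right]ds,
\end{equation*}
so the entire dependence on $\lambda$ is packaged into the single positive kernel $s\mapsto \E[1/(1+sT_\lambda)]$, and maximizing $C_s(\lambda)$ reduces to minimizing this integral.

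The core step is to show that $\E[1/(1+sT_\lambda)]$ is (a) strictly decreasing in every $\lambda_i$ and (b) Schur-convex in $\lambda$. Part (a) is immediate because $T_\lambda$ is stochastically increasing in each $\lambda_i$ while $t\mapsto 1/(1+st)$ is decreasing; this already forces the optimum to use full power $\sum_i\lambda_i=P$. For (b), I would invoke the Bernstein representation $1/(1+st)=\int_0^\infty e^{-v}e^{-svt}\,dv$ (which is precisely the complete monotonicity of $1/(1+st)$ in $t$) to write
\begin{equation*}
\E\!\left[\frac{1}{1+sT_\lambda}\right]=\int_0^\infty e^{-v}\prod_{i=1}^{N_T}\frac{1}{1+sv\lambda_i}\,dv.
\end{equation*}
For any fixed $s,v>0$, $-\sum_i\log(1+sv\lambda_i)$ is a sum of identical convex symmetric functions of the $\lambda_i$'s and is therefore Schur-convex; exponentiating preserves the order, so the integrand, and hence $\E[1/(1+sT_\lambda)]$, is Schur-convex in $\lambda$. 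Equivalently, this is the convex-order statement $\lambda\succ\mu \Rightarrow T_\lambda\geq_{\mathrm{cx}}T_\mu$ of \cite{shaked_stochastic_order} applied to the convex map $t\mapsto 1/(1+st)$.

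Combining (a) and (b) pointwise in $s$, the integrand $s^{-1}\E[1/(1+sT_\lambda)]$ is minimized, subject to $\sum_i\lambda_i\leq P$, by the uniform full-power allocation $\lambda_i=P/N_T$, i.e.\ $\Sigma_{\bx}^\star=(P/N_T)\mathbf{I}$; substituting back into \eqref{eq_MISOSE_capa_beam} delivers \eqref{eq_MISOSE_capa}. The main obstacle lies in the second paragraph: both expectations in \eqref{eq_MISOSE_capa_beam} are individually Schur-concave in $\lambda$, so their difference has no a priori Schur structure, and a direct KKT attack (as used numerically in \cite{Petropulu_MISO_ergodic}) does not obviously deliver a closed-form optimizer. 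The derivative-in-$s$ trick, together with the complete monotonicity of $1/(1+st)$, is what converts this awkward difference into a single sign-definite, Schur-convex kernel on which the stochastic ordering machinery applies cleanly.
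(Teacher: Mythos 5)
Your argument is correct and delivers the same optimizer $\Sigma_{\bx}^\star=(P/N_T)\mathbf{I}$, but it gets there by a more self-contained route than the paper. The paper likewise reduces to the eigenvalues by unitary invariance and likewise rests, at bottom, on the majorization $(P/N_T,\dots,P/N_T)\prec(d_1,\dots,d_{N_T})$ applied to the Schur-concave map $(d_1,\dots,d_{N_T})\mapsto\sum_k\log(1+c\,d_k)$; the difference is in how the awkward difference of two Schur-concave expectations is tamed. The paper does it by citing a black box, \cite[Th.~5.A.4]{shaked_stochastic_order}: it sets $f(x)=\log(a+x)-\log(1+x)$ with $a=\sigma_{\bg}^2/\sigma_{\bh}^2$, checks by direct differentiation that $f'$ is completely monotone, and thereby reduces the comparison of $\E[f(\bg^H\bD\bg)]$ and $\E[f(\bg^H\bD^*\bg)]$ to the Laplace-transform order, which it verifies from the explicit product-form MGF plus Schur-concavity. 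Your two integral representations --- integrating $\frac{d}{ds}\E[\log(1+sT_\lambda)]$ over the SNR variable $s\in[\sigma_{\bg}^2,\sigma_{\bh}^2]$ to produce the single sign-definite kernel $\E[1/(1+sT_\lambda)]$, and then the Bernstein identity $1/(1+st)=\int_0^\infty e^{-v}e^{-svt}\,dv$ to expose $\prod_i(1+sv\lambda_i)^{-1}$ --- are precisely an unpacked, from-scratch proof of the instance of that lemma needed here: the inner representation \emph{is} the complete monotonicity of $f'$, and the outer one \emph{is} the passage from Laplace-transform order to expectation order. What your version buys is transparency and an integrated treatment of the full-power step (your monotonicity of $\E[1/(1+sT_\lambda)]$ in each $\lambda_i$ replaces the paper's citation of \cite[Sec.~V]{Petropulu_MISO_ergodic} for $\mathrm{Tr}(\bD)=P$); what the paper's version buys is brevity and a reusable general principle. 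For a polished write-up you would only need to add the routine domination argument justifying differentiation under the expectation (e.g.\ $T_\lambda/(1+sT_\lambda)\le 1/s$) and to note that combining your steps (a) and (b) means first scaling an arbitrary feasible $\lambda$ up to full power and then majorizing --- both immediate.
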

\begin{proof}
Subjecting to \eqref{power_cons}, after substituting
$\mathbf{h}\sim CN(0, \sigma_{\bh}^2 \mathbf{I})$ and
$\mathbf{g}\sim CN(0, \sigma_{\bg}^2 \mathbf{I})$ into the
optimization problem in \eqref{eq_MISOSE_capa_beam}, it becomes
\begin{equation} \label{eq_Cs_optimizae}
\max_{\Sigma_{\bx}}\!\!\left(\!\!\!\E_{\mathbf{g}}\!\left[\log\frac{\sigma_{\mathbf{g}}^2/\sigma_{\mathbf{h}}^2+\mathbf{g}^H\Sigma_{\mathbf{x}}\mathbf{g}}{\sigma_{\mathbf{g}}^2/\sigma_{\mathbf{h}}^2}\right]\!-\!\E_{\mathbf{g}}\!\left[\log(1+\mathbf{g}^H\Sigma_{\mathbf{x}}\mathbf{g})\right]\!\!\!\right).
\end{equation}
By using the eigenvalue decomposition
$\Sigma_{\bx}=\bU\mathbf{D}\bU^H$, where $\bU$ is unitary and
$\mathbf{D}$ is diagonal, finding the optimal $\Sigma^*_{\bx}$ of
\eqref{eq_Cs_optimizae} is equivalent to solving
\begin{align}
&\max_{\bU,\,\mathbf{D}}\!\!\left(\!\E_{\mathbf{g}}\!\!\left[\!\log\left(\sigma_{\bg}^2/\sigma_{\bh}^2+\mathbf{g}^H\bU\mathbf{D}\bU^H\mathbf{g}\right)\right]\!\!-\!\!\E_{\mathbf{g}}\!\!\left[\log(1+\mathbf{g}^H\bU\mathbf{D}\bU^H\mathbf{g})\!\right]\!\right),
\notag \\
=&\max_{\mathbf{D}}\left(\E_{\mathbf{g}}\!\!\left[\log\left(\sigma_{\bg}^2/\sigma_{\bh}^2+\mathbf{g}^H\mathbf{D}\mathbf{g}\right)\right]-\E_{\mathbf{g}}\!\!\left[\log(1+\mathbf{g}^H\mathbf{D}\mathbf{g})\right]\right),
\label{EQ_Cs_power_allocation}
\end{align}
where the equality comes from the fact that the distribution of
$\mathbf{g}\sim CN(0, \sigma_{\bg}^2 \mathbf{I})$ is unchanged by
the rotation of unitary $\bU$, and we can set $\Sigma_{\bx}=\bD$
($\bU=\bI$) without loss of optimality.

In the following, we show that subjecting to $\mathrm{Tr}(\bD)
\leq P$, the optimal $\bD$ for \eqref{EQ_Cs_power_allocation} is
\begin{equation} \label{eq_optimial_input}
\mathbf{D}^*=\mbox{diag}\{P/N_T,P/N_T,\cdots,P/N_T\}.
\end{equation}
First of all, from \cite[Section V]{Petropulu_MISO_ergodic}, the
optimal $\bD$ for \eqref{EQ_Cs_power_allocation} satisfies
$\mathrm{Tr}(\bD)=P$. For any
$\mathbf{D}=[d_1,d_2,\cdots,d_{N_T}]$ where $\sum{d_i}=P$ and $d_i
\geq 0, \forall i$, we want to prove that for $\bD^*$ defined in
\eqref{eq_optimial_input}
\begin{align}
&\E_{\mathbf{g}}\left[\log\left(a+\mathbf{g}^H\mathbf{D}\mathbf{g}\right)\right]-\E_{\mathbf{g}}\left[\log\left(1+\mathbf{g}^H\mathbf{D}\mathbf{g}\right)\right]\notag\\\leq
&\E_{\mathbf{g}}\left[\log\left(a+\mathbf{g}^H\mathbf{D}^*\mathbf{g}\right)\right]-\E_{\mathbf{g}}\left[\log\left(1+\mathbf{g}^H\mathbf{D}^*\mathbf{g}\right)\right],
\label{EQ_optimal_power_allocation}
\end{align}
where we denote $\sigma_{\bg}^2/\sigma_{\bh}^2$ by $a$, which
belongs to $[0,1)$. Here we introduce some results from the
stochastic ordering theory \cite{shaked_stochastic_order} to
proceed. \vspace{-0.4cm}
\begin{Def}\cite[p.234]{shaked_stochastic_order}\label{Def_completely_mono} A function $\psi(x):[0,\infty)\rightarrow \mathds{R}$ is completely monotone if for all $x>0$ and $n=0,1,2,\cdots,$ its derivative $\psi^{(n)}$ exists and $(-1)^n\psi^{(n)}(x)\geq 0$.
\end{Def}
\vspace{-0.65cm}
\begin{Def}\cite[(5.A.1)]{shaked_stochastic_order} \label{Def_LT}
Let $B_1$ and $B_2$ be two nonnegative random variables such that
$\E[e^{-sB_1}]\geq\E[e^{-sB_2}]$, for all $s>0$. Then $B_1$ is
said to be smaller than $B_2$ in the Laplace transform order,
denoted as $B_1\leq_{LT} B_2$.
\end{Def}
\vspace{-0.65cm}
\begin{Lem}\cite[Th. 5.A.4]{shaked_stochastic_order} \label{Lemma_LT_eq_MG}
Let $B_1$ and $B_2$ be two nonnegative random variables. If
$B_1\leq_{LT}B_2$ then $\E[f(B_1)]\leq\E[f(B_2)]$, where the first
derivative of a differentiable function $f$ on $[0,\infty)$ is
completely monotone, provided that the expectations exist.
\end{Lem}

To prove \eqref{EQ_optimal_power_allocation}, we let
$B_1=\mathbf{g}^H\mathbf{D}\mathbf{g}$,
$B_2=\mathbf{g}^H\mathbf{D}^*\mathbf{g}$, and
$f(x)=\log(a+x)-\log(1+x)$ to invoke Lemma \ref{Lemma_LT_eq_MG}.
It can be easily verified that $\psi$(x), the first derivative of
$f(x)$, satisfies Definition \ref{Def_completely_mono}. More
specifically, the $n$th derivative of $\psi$ meets
\begin{equation}
\psi^{(n)}(x)=\left\{\begin{array}{ll}
\frac{n!}{(a+x)^{n+1}}-\frac{n!}{(1+x)^{n+1}}>0, & \mbox{if $n$ is even,}\\
\frac{-n!}{(a+x)^{n+1}}+\frac{n!}{(1+x)^{n+1}}<0, & \mbox{if $n$ is odd,}\\
\end{array}\right.
\end{equation}
when $x>0$, since $a \in [0,1)$. Now from Lemma
\ref{Lemma_LT_eq_MG} and Definition \ref{Def_LT}, we know that to
prove \eqref{EQ_optimal_power_allocation} is equivalent to proving
$\E[e^{-sB_1}]\geq \E[e^{-sB_2}]$ or $\log
(\E[e^{-sB_1}]/\E[e^{-sB_2}])\geq 0,\,\forall s>0$. From
\cite[p.40]{Mathai}, we know that
\begin{equation}\label{EQ_log_MG}
\log \!\!
\left(\frac{\E[e^{-sB_1}]}{\E[e^{-sB_2}]}\right)\!\!=\!\!\sum_{k=1}^{N_T}\log(1+2d^*_ks)-\sum_{k=1}^{N_T}\log(1+2d_ks).
\end{equation}
To show that the above is nonnegative, we resort to the
majorization theory. Note that
$\sum_{k=1}^{N_T}\log(1+2\check{d}_ks)$ is a Schur-concave
function \cite{Marshall_Inequalities} in
$(\check{d}_1,\ldots,\check{d}_{N_T})$, $\forall s>0$, and by the
definition of majorization \cite{Marshall_Inequalities}
\[
(d_1^*,\,\cdots,d_{N_T}^*)=(P/N_T,\,P/N_T,\,\cdots,\,P/N_T)\prec(d_1,\,d_2,\,\cdots
d_{N_T}),
\]
where $\bb\prec\ba$ means that $\bb$ is majorized by $\ba$. Thus
from \cite{Marshall_Inequalities}, we know that the RHS of
\eqref{EQ_log_MG} is nonnegative, $\forall s>0$. Then
\eqref{EQ_optimal_power_allocation} is valid, and $\bD^*$ is the
optimal $\bD$ for \eqref{EQ_Cs_power_allocation}. Note that
$\mathbf{D}^*$ is also the optimal $\Sigma_{\bx}$ of
\eqref{eq_MISOSE_capa_beam} according to the discussion under
\eqref{EQ_Cs_power_allocation}. Substituting $\bD^*$ in
\eqref{eq_optimial_input} as the optimal $\Sigma_{\bx}$ into the
target function of \eqref{eq_MISOSE_capa_beam}, we have
\eqref{eq_MISOSE_capa}.
\end{proof}

\textit{Remark 2}: In \cite[Sec. VII]{Petropulu_MISO_ergodic}, the
optimal $\Sigma_{\bx}$ for \eqref{eq_MISOSE_capa} is found by an
iterative algorithm without guaranteeing the optimality. The
contribution of Theorem \ref{Theorem_No_CQI} is analytically
finding the optimal $\Sigma_{\bx}$, which equals to $\bD^*$ in
\eqref{eq_optimial_input}, by exploiting the completely monotone
property of the ergodic secrecy capacity and invoking Lemma
\ref{Lemma_LT_eq_MG}. Finally, as discussed in Section
\ref{Sec_Intro}, the secrecy rate lower-bound in
\cite{Rezki_wrong} is not achievable, thus the conclusion in
\cite{Rezki_wrong} that uniform power allocation among transmit
antennas as \eqref{eq_optimial_input} is \textit{not} secrecy
capacity achieving is \textit{wrong}.


\begin{figure}[t]
\centering
\epsfig{file=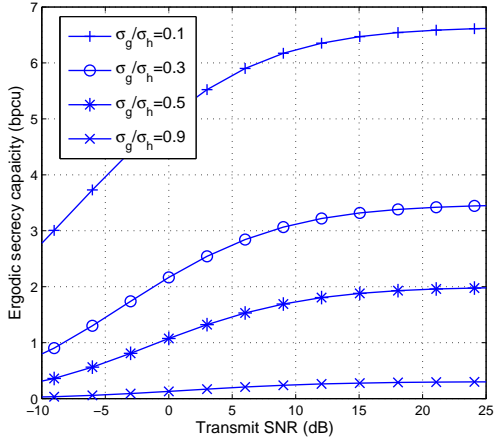,
width=0.42\textwidth} \caption{Comparison of the ergodic secrecy
capacities under different eavesdropper-to-legitimate-channel
qualities $\sigma_{\bg}/\sigma_{\bh}$.}
\label{Fig_rate_different_sigma_ratio}
\end{figure}

\begin{figure}[t]
\centering
\epsfig{file=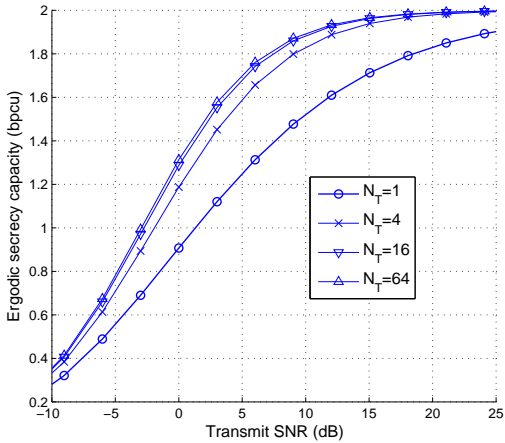, width=0.42\textwidth}
\caption{Comparison of the ergodic secrecy capacities under
different numbers of transmit antennas $N_T$.}
\label{Fig_rate_different_N_T}
\end{figure}
\vspace{-3mm}

\section{Simulation Results}
\vspace{-1mm}

In this section we compare the secrecy capacities under different
channel conditions. The transmit signal-to-noise ratio (SNR) is
defined as $P$ in dB scale since $n_y$ and $n_y$ both have unit
variances. In Fig. \ref{Fig_rate_different_sigma_ratio} we compare
the secrecy capacities with $N_T=2$ under different
$\sigma_{\bg}/\sigma_{\bh}$. The secrecy capacity increases with
decreasing $\sigma_{\bg}/\sigma_{\bh}$. The capacity converges to
$2\log (\sigma_{\bh}/\sigma_{\bg})$ when the SNR is high, which
meets $\eqref{eq_MISOSE_capa}$ with large $P$. In Fig.
\ref{Fig_rate_different_N_T}, we compare the secrecy capacities
with different numbers of transmit antennas $N_T$. We can also
find that the capacity converges when $N_T$ is large enough. This
results can be easily seen by letting $N_T \rightarrow \infty$ in
\eqref{eq_MISOSE_capa}, and applying the central limit theorem on
$||\mathbf{h}||^2/N_T$ and $||\mathbf{g}||^2/N_T$, respectively.

\vspace{-4mm}

\section{Conclusion}\label{Sec_conclusion}
In this paper, we derived the secrecy capacity of the MISOSE
ergodic fast Rayliegh fading wiretap channel, where only the
statistical CSIT of the legitimate and eavesdropper channels is
known. By introducing a new secrecy capacity upper bound, we first
showed that Gaussian input without prefixing is secrecy capacity
achieving. Then we analytically found the optimal channel input
covariance matrix, and fully characterized the secrecy capacity.

\vspace{-4mm}


\bibliographystyle{IEEEtran}
\bibliography{IEEEabrv,SecrecyPs2}

\end{document}